\documentclass{article}

\usepackage[preprint]{neurips_2026}

\usepackage[utf8]{inputenc} 
\usepackage[T1]{fontenc}    
\usepackage{hyperref}       
\usepackage{url}            
\usepackage{booktabs}       
\usepackage{amsfonts}       
\usepackage{nicefrac}       
\usepackage{microtype}      
\usepackage{xcolor}         

\usepackage{amsmath}
\usepackage{amsthm}
\usepackage{graphicx}
\usepackage{threeparttable}
\usepackage{multirow}
\usepackage{multicol}
\usepackage{pifont}

\usepackage{wrapfig}
\usepackage{xspace}
\newcommand{\ie}{\textit{i.e.,}\xspace}
\newcommand{\eg}{\textit{e.g.,}\xspace}
\newcommand{\camal}{\textsc{CAMAL}\xspace}
\usepackage{todonotes}
\usepackage{subcaption}

\newtheorem{proposition}{Proposition}

\usepackage{tcolorbox}

\title{CAMAL: Improving Attention Alignment and Faithfulness with Segmentation Masks}

\author{
  Rajdeep Singh Hundal\\
  National University of Singapore\\
  \texttt{rajdeep@u.nus.edu}\\
  \And
  Yan Xiao\\
  Sun Yat-sen University\\
  \texttt{xiaoy367@mail.sysu.edu.cn}\\
  \AND
  Jin Song Dong\\
  National University of Singapore\\
  \texttt{dcsdjs@nus.edu.sg}\\
  \And
  Manuel Rigger\\
  National University of Singapore\\
  \texttt{rigger@nus.edu.sg}\\
}

\begin{document}

\maketitle

\begin{abstract}
Many vision datasets now provide segmentation masks in addition to annotated images to support a wide range of tasks.
In this work, we propose \emph{Class Activation Map Attention Learning} (\camal), an efficient and scalable method that utilizes segmentation masks to improve attention \emph{alignment} and \emph{faithfulness} in vision models.
Specifically, attention alignment refers to the degree to which a model's attention aligns with ground-truth discriminative regions, while attention faithfulness refers to the degree to which a model's attention influences its decision.
Improving both attention alignment and faithfulness is essential for ensuring that model attention is both spatially accurate and causally meaningful.
To improve attention alignment and faithfulness in vision models, \camal first extracts the model's attention for each image during training and then compares the attention to ground-truth discriminative regions obtained from the corresponding segmentation masks.
\camal then acts as an auxiliary regularizer, encouraging attention that aligns with ground-truth discriminative regions, while suppressing attention elsewhere.
We evaluated \camal across two learning paradigms---\emph{Deep Learning} (DL) and \emph{Deep Reinforcement Learning} (DRL)---and observed consistent, significant improvements in both attention alignment and faithfulness.
In particular, CAMAL yields statistically significant gains in attention alignment across all settings, and improves attention faithfulness by over 35\% compared to recent work.
Moreover, we show that improved attention alignment and faithfulness enhance explainability, while yielding improved or comparable generalization performance without increasing inference cost.
These findings demonstrate that the spatial information contained within segmentation masks can be effectively leveraged to guide model attention across learning tasks.
\end{abstract}

\section{Introduction}

As vision models are increasingly deployed in real-world applications, it is imperative to understand their decisions and correct them when necessary.
To this end, \emph{Class Activation Maps} (CAMs)~\citep{zhou2016learning} are widely used to interpret vision models.
They highlight a model's attention, that is, the model's perceived discriminative regions in an image.
This provides an explanation of which regions the model attends to during predictions.
However, as an explainability technique, CAM approaches cannot ensure attention \emph{alignment}---the degree to which a model's attention aligns with ground-truth discriminative regions.
In fact, recent work by~\citet{ye2026the} shows that models are still susceptible to shortcut learning, mistaking spurious regions for discriminative ones.
Furthermore, as demonstrated by~\citet{adebayo2018sanity}, attention alignment alone does not guarantee attention \emph{faithfulness}---the degree to which a model's attention influences its decision.
That is to say, attention might itself be inconsequential to the model's decision-making process.
Thus, CAMs offer interpretation, but lack corrective capability.

To address this gap in corrective capability, recent work have proposed \emph{attention supervision}.
Notably,~\citet{yang2026learning},~\citet{nautiyal2025paric}, and~\citet{petryk2022guiding} proposed auxiliary training objectives that minimize the distance between a model's attention and reference regions that indicate where the attention should be.
A common characteristic of these works is that they all use \emph{CLIP-based priors}---pretrained zero-shot models originally proposed by OpenAI~\citep{radford2021learning}---to derive the reference regions.
We term these reference regions as \emph{pseudo-discriminative regions}, a proxy for ground-truth discriminative regions.
Crucially, because these priors originate from pretrained zero-shot models optimized for broad generalization rather than precise spatial localization, the reference regions they project may only coarsely approximate true discriminative regions~\citep{li2022expansion}, a phenomenon we empirically observed in a pilot study (see~\autoref{app:eed}).
Consequently, this leads to attention being corrected with signals that are themselves \emph{untrustworthy}.

Many vision datasets now provide segmentation masks in addition to annotated images to support a wide range of tasks, including those used in \emph{Deep Learning} (DL)~\citep{illarionova2025packed,OpenImagesV7,CelebAMask-HQ} and \emph{Deep Reinforcement Learning} (DRL)~\citep{li2023behavior,Wydmuch2019ViZdoom,dosovitskiy2017carla}.
Furthermore, modern developments such as SAM 3 by Meta~\citep{carion2025sam3segmentconcepts} significantly reduce the cost of obtaining high-quality segmentation masks by enabling prompt-based automatic segmentation.
Crucially, these segmentation masks are typically human-verified, and thus serve as a \emph{trustworthy} signal that reflects ground-truth discriminative regions.
This increased accessibility to segmentation masks raises a natural question: can the well-structured, complete, and accurate spatial information provided within these masks be leveraged to reliably correct attention in vision models?

In this work, we propose \emph{Class Activation Map Attention Learning} (\camal) as an efficient and scalable approach towards attention supervision that leverages segmentation masks to reliably correct attention in vision models.
Specifically, \camal leverages these masks to identify and encourage attention within ground-truth discriminative regions, while suppressing attention in spurious regions.
At its core, \camal is based on the premise that attention supervision must (1) be grounded in trustworthy reference regions to ensure reliability, and (2) scale efficiently.
We provide our source code at \href{https://doi.org/10.5281/zenodo.20084662}{https://doi.org/10.5281/zenodo.20084662} to support reproducibility.
Our contributions are:
\begin{enumerate}
    \item We introduce \camal, a principled approach to attention supervision that grounds attention learning in trustworthy reference regions obtained from segmentation masks.
    \item We formulate a mask-guided auxiliary regularization objective that jointly encourages attention within ground-truth discriminative regions and suppresses attention in spurious regions.
    In addition, we propose an efficient batch-level attention extraction method that reduces the typical computational overhead from $O(B)$ to $O(1)$, enabling scalable supervision.
    \item We provide a comprehensive and statistically rigorous evaluation of attention supervision across two learning paradigms, DL and DRL.
    In our evaluation, we show that \camal achieves consistent, statistically significant improvements in attention alignment and over 35\% gains in attention faithfulness, while maintaining improved or comparable generalization across 18 DL settings and 4 DRL environments.
\end{enumerate}

\section{Related Work}
Recent work by~\citet{yang2026learning},~\citet{nautiyal2025paric}, and~\citet{petryk2022guiding} improve attention through auxiliary regularization objectives that align attention with reference regions (see~\autoref{app:eed} for a more detailed discussion of these works).
Beyond auxiliary regularization objectives, several other approaches have been explored for improving attention in vision models.
Among these, input augmentation approaches improve attention by modifying training images.
For example,~\citet{Singh_2017_ICCV} randomly masked patches in training images, while~\citet{yun2019cutmix} swapped patches instead.
Architectural approaches in the same vein have also been explored.
For example,~\citet{zhang2018adversarial} augmented the network architecture by introducing an additional classifier head that takes masked feature maps as input, while~\citet{choe2019attention} introduced a self-attention mechanism for each feature map.
Both input and architectural augmentation approaches encourage the model to discover a broader set of ground-truth discriminative regions instead of relying on the most discriminative regions.
\camal shares a similar motivation, but instead improves attention directly through an auxiliary regularization objective grounded in trustworthy reference regions, rather than indirectly through input or network augmentation.

\section{Method}
\label{sec:method}
\begin{figure}[t]
    \centering
    \includegraphics[width=\linewidth,trim=0.9cm 0cm 4.2cm 0cm,clip]{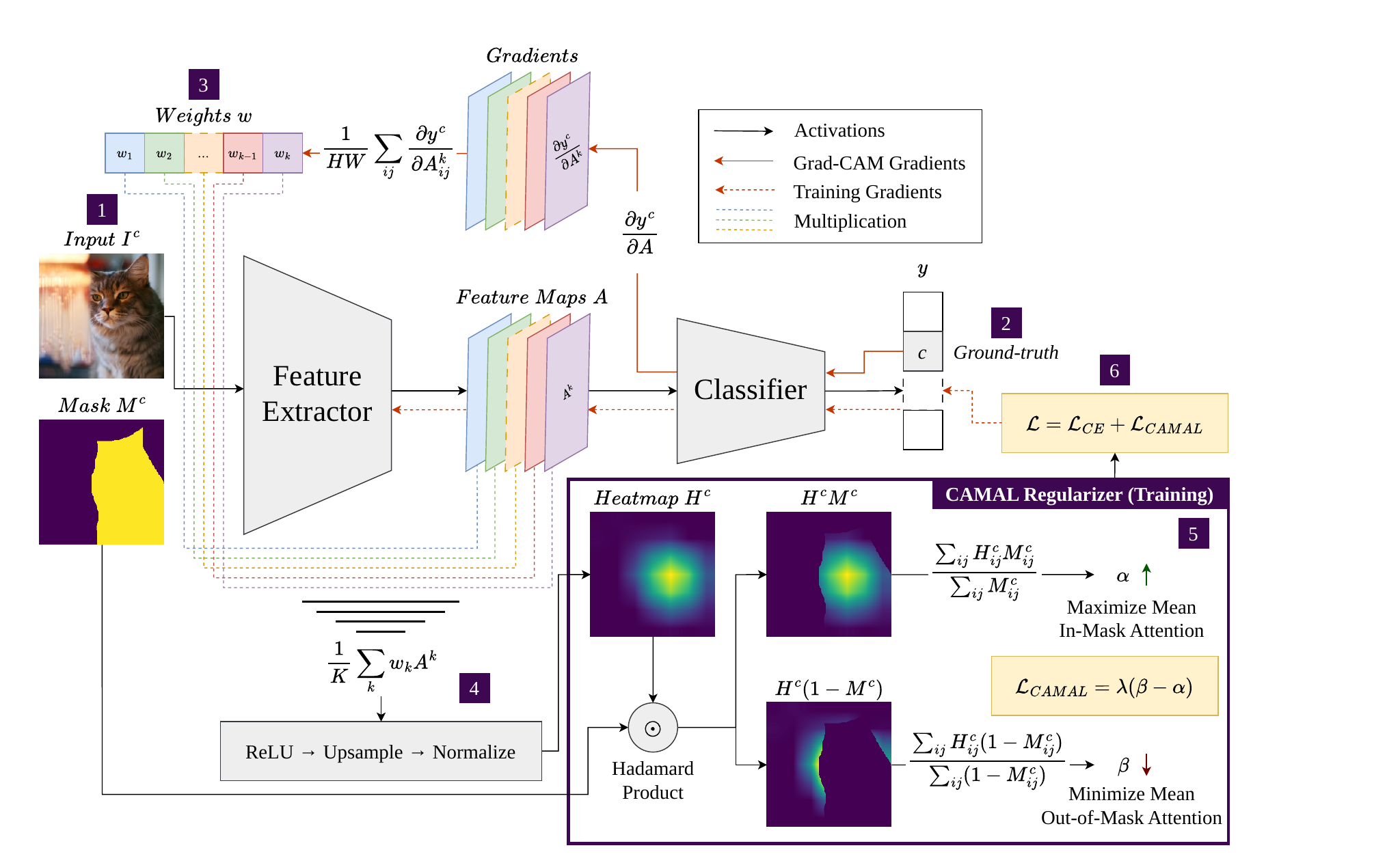}
    \caption{Architecture diagram for \camal. Given an input image $I^c$ passed through a model (steps 1--2), \camal extracts the model's attention $H^c$ (steps 3--4) and compares it with ground-truth discriminative regions obtained from the corresponding segmentation mask $M^c$ (step 5). Lastly, \camal regularizes the network by encouraging attention within ground-truth discriminative regions ($\alpha\uparrow$) while suppressing attention in spurious regions ($\beta\downarrow$) (steps 5--6).
    Importantly, while illustrated with a \emph{Convolutional Neural Network} (CNN) for clarity, \camal can be naturally extended to a wide range of network architectures and tasks.}
    \label{fig:arch}
\end{figure}

We ground attention supervision in trustworthy reference regions obtained directly from segmentation masks, formulating a mask-guided auxiliary regularization objective that explicitly encourages attention within ground-truth discriminative regions while suppressing attention in spurious regions (see~\autoref{fig:arch}).
In this section, we first discuss the proposed auxiliary regularization objective, followed by batch-level optimizations to make attention supervision feasible for realistic training regimes, and finally describe how \camal can be naturally extended to a wide range of network architectures and tasks.

\subsection{\camal Regularization Objective}
At a high level, the \camal regularization objective quantifies and promotes agreement with ground-truth discriminative regions while penalizing agreement with spurious regions.
To formalize this regularization objective, let $I^c \in \mathbb{R}^{3 \times H \times W}$ denote an input image with ground-truth class $c$.
We define the model's attention for $I^c$ as a spatial heatmap $H^c \in [0,1]^{H \times W}$ that captures the model's perceived discriminative regions for class $c$ (steps 1--4 in~\autoref{fig:arch}).
In this work, we instantiate $H^c$ with Grad-CAM~\citep{selvaraju2017grad}, a gradient-based technique that extracts attention.
However, \camal is not tied to Grad-CAM and can operate with any attention extraction technique.
Correspondingly, let $M^c \in \{0,1\}^{H \times W}$ denote the segmentation mask for $I^c$, where $M_{ij} = 1$ indicates the ground-truth discriminative regions for class $c$.
Our objective is to encourage the model's attention $H^c$ to align with $M^c$ (steps 5--6 in~\autoref{fig:arch}).
Intuitively, this biases the model to rely on ground-truth discriminative regions enclosed by the segmentation masks when making decisions, rather than spurious regions.

To realize this objective, we decompose a model's attention $H^c$ into two components: attention within ground-truth discriminative regions and attention outside them.
Specifically, we define
\begin{equation}
    \alpha = \frac{\sum\limits_{ij}H^c_{ij}M^c_{ij}}{\sum\limits_{ij}M^c_{ij}}
    \quad\text{and}\quad
    \beta = \frac{\sum\limits_{ij}H^c_{ij}(1-M^c_{ij})}{\sum\limits_{ij}(1-M^c_{ij})}
\end{equation}
where $\alpha$ measures the mean attention within ground-truth discriminative regions, while $\beta$ measures the mean attention outside them (step 5 in~\autoref{fig:arch}).
Recent work on attention supervision typically employ regularizers that focus solely on suppressing attention in spurious regions, that is, by minimizing terms analogous to $\beta$~\citep{nautiyal2025paric,petryk2022guiding}.
However, such formulations do not explicitly account for attention within ground-truth discriminative regions and thus, fail to capture the relative discrepancy between model attention and the ground-truth.
We empirically demonstrate this limitation in~\autoref{app:eed}.
In contrast, our formulation jointly considers both $\alpha$ and $\beta$, enabling a direct measure of how far attention deviates from the ground-truth.

A well-behaved model should attend more towards ground-truth discriminative regions ($\alpha\uparrow$) and less towards spurious regions ($\beta\downarrow$).
Accordingly, we define an auxiliary regularization objective that encourages $\alpha$ to be large and $\beta$ to be small by minimizing $\beta - \alpha$: 
\begin{equation}
    \label{eq:camalloss}
    \mathcal{L} = \overbrace{-\frac{1}{B} \sum_{b} \log(p_{b,y_b})}^{\mathcal{L}_{CE}} + \overbrace{\frac{\lambda}{B}\sum_{b} (\beta_b - \alpha_b)}^{\mathcal{L}_{CAMAL}}
\end{equation}
Here, $\mathcal{L}$ is minimized over a mini-batch of size $B$ where $b \in \{1,...,B\}$ and $\mathcal{L}_{CAMAL}$ acts as an auxiliary regularizer for cross-entropy loss $\mathcal{L}_{CE}$ (step 6 in~\autoref{fig:arch}).
$\lambda$ controls \camal's regularizing impact and can be tuned empirically.
Overall, \camal introduces an additional mask-guided auxiliary regularizer to the de facto objective, which emphasizes in-mask attention via $\alpha$ and suppresses out-of-mask attention via $\beta$.

\subsection{Efficient Batch-Level Attention Supervision}
\label{sec:scale}
\begin{wrapfigure}[17]{r}{0.40\textwidth}
    \centering
    \includegraphics[width=\linewidth,trim=0cm 0cm 0cm 0cm,clip]{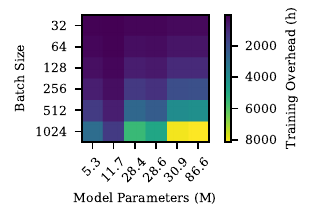}
    \caption{Training overhead incurred by per-sample attention extraction relative to batch-level attention extraction. Values are based on empirical measurements from our experiments where available, with remaining points estimated from known scaling trends.}
    \label{fig:cpe}
\end{wrapfigure}
Existing attention supervision approaches do not scale well to large datasets, as the gradient-based attention extraction techniques they rely on are typically formulated on a per-sample basis.
As shown in~\autoref{fig:cpe}, the training overhead of per-sample attention extraction grows linearly with batch size, quickly becoming prohibitive for larger models and realistic training regimes.
Concretely, we consider Grad-CAM, a representative gradient-based attention extraction technique:
\begin{equation}
    \label{eq:gc}
    G^{c} = ReLU\underbrace{\Bigg(\frac{1}{K}\sum_k \overbrace{\Bigg(\frac{1}{HW} \sum_{ij} \frac{\partial y^c}{\partial A^k_{ij}}\Bigg)}^{w_k} A^k\Bigg)}_{\text{linear combination}}
\end{equation}
Here, the Grad-CAM for an image $I^c$ classified by a CNN is defined by $G^c$ where $c$, $w_k$, and $A^k$ denote the ground-truth class, weight for the $k$-th feature map, and $k$-th feature map respectively (steps 1--4 in~\autoref{fig:arch}).
Crucially, to extract $G^c$ for $I^c$ in a mini-batch of size $B$, a single isolated backward pass from the corresponding scalar logit $y^c$ is required.
This translates to an additional $B$ backward passes per mini-batch.
To the best of our knowledge, recent work in attention supervision leveraging gradient-based attention extraction techniques do not explicitly formalize how to address this bottleneck and implement their approach efficiently at scale~\citep{yang2026learning,nautiyal2025paric,petryk2022guiding}.
However, this is crucial as attention supervision must be performed at every training step, rendering per-sample attention extraction computationally inefficient, as shown in~\autoref{fig:cpe}.

To address this, we introduce and formalize an efficient, scalable approach towards extracting attention for supervision with gradient-based techniques.
Specifically, we sum the corresponding $y^c$ logits across the mini-batch into a single scalar and perform a backward pass with respect to this scalar to extract attention for all samples in the mini-batch.
Although counterintuitive, as this might seem like mixing attention signals across the mini-batch, this is equivalent to computing attention for each sample individually, while reducing the computational complexity from $O(B)$ to $O(1)$ (see~\autoref{prop1} in~\autoref{app:eed} for proof).
Using Grad-CAM in~\autoref{eq:gc} to illustrate this operation, the weight $w_k$ for the $k$-th feature map $A^k$ is reformulated as
\begin{equation}
\underbrace{\frac{1}{HW} \sum_{ij} \frac{\partial y^c}{\partial A^k_{ij}}}_{w_k} \rightarrow
\underbrace{\frac{1}{HW}\sum_{ij} \Bigg(\frac{\partial}{\partial A^k_{ij}}\overbrace{\sum_{b} y_b^{c_b}}^{\text{summed scalar}}\Bigg)}_{\text{reformulated }w_k}
\end{equation}
where $b \in \{1,...,B\}$ denotes samples in a mini-batch and the gradients for a sample are computed with respect to the summed scalar.
Furthermore, this formulation naturally extends to other gradient-based attention extraction techniques, such as DiffGradCAM~\citep{piland2025diffgradcam}, LayerCAM~\citep{jiang2021layercam}, Integrated Gradients~\citep{sundararajan2017axiomatic}, and SmoothGrad~\citep{smilkov2017smoothgrad}.

\subsection{Architecture and Task Agnosticism}
While presented in~\autoref{fig:arch} with respect to a CNN-based classifier for clarity, \camal can be extended to any network architecture as long as spatial feature maps can be extracted from intermediate representations.
For example, transformers can be naturally adapted to work with \camal.
Let $X \in \mathbb{R}^{B \times N \times C}$ denote the transformer output where $B$, $N$, and $C$ correspond to the batch, sequence, and channel dimensions respectively.
Assuming patch-based tokenization, $X$ can be reshaped into $X \in \mathbb{R}^{B \times H \times W \times C}$, where $H$ and $W$ correspond to the height and width, yielding spatial feature maps analogous to those produced by CNNs.
However, the exact semantics of this transformation might differ depending on the network architecture.
For example, $N = H \times W + 1$ for ViT by~\citet{dosovitskiy2021an}, where the additional token corresponds to the class token and is omitted from the transformation.
In contrast, for Swin-V2 by~\citet{liu2022swin}, $N$ is already in the form of $H \times W$ and thus, no transformation is required.
This transformation can also be ignored for hybrid architectures like MaxViT by~\citet{tu2022maxvit}, where convolutional stages already produce explicit spatial feature maps.

Furthermore, \camal extends to any task with a well-defined scalar target from which attention can be computed.
For example, in our experiments, we evaluate \camal in a weakly-supervised setting with DRL.
Specifically, we augment the \emph{Proximal Policy Optimization} (PPO) objective~\citep{schulman2017proximal} as $\mathcal{L} = \mathcal{L}_{PPO} + \mathcal{L}_{CAMAL}$.
Unlike supervised learning, DRL does not provide ground-truth labels---there is no known correct action for a given state $s$.
Instead, the actor network outputs a probability distribution over actions and the policy learns to select actions by maximizing the expected return.
As this does not yield a scalar supervision signal analogous to a ground-truth logit $y^c$, we instead use the critic's value estimate $V(s)$---the expected return from state $s$---as the target for attention computation.

\section{Experiments}
\label{sec:exp}
We compare the effectiveness of \camal in attention alignment, faithfulness, and generalization against two baselines, \emph{Prior} and \emph{Vanilla}.
The former uses pseudo-discriminative regions derived from CLIP as reference regions (\ie similar to recent work), while the latter does not incorporate any form of attention supervision, and is trained solely on the task objective.
Specifically, we trained six DL models of varying sizes and layer compositions on three challenging fine-grained classification datasets, where classes are highly similar and require learning subtle visual distinctions.
The evaluated models comprise ResNet-18~\citep{he2016deep}, ConvNeXt-T~\citep{liu2022convnet}, EfficientNet-B0~\citep{tan2019efficientnet}, ViT-B/16~\citep{dosovitskiy2021an}, SwinV2-T~\citep{liu2022swin}, and MaxViT-T~\citep{tu2022maxvit}.
The evaluated datasets comprise BUSI~\citep{ALDHABYANI2020104863}, Oxford-IIIT-Pet~\citep{parkhi12a}, and Oxford-Flower~\citep{Nilsback08}.

In addition, we explore the effectiveness of attention supervision in a weakly-supervised setting by extending the comparison of \camal and Vanilla to DRL across four ViZDoom~\citep{Kempka2016ViZDoom} environments under the PPO objective.
In particular, ViZDoom is a visual DRL benchmark based on the Doom game engine, featuring first-person shooter environments.
We select the Basic, DefendLine, DefendCenter, and TakeCover environments to cover diverse task demands, ranging from single-target combat to multi-threat engagement and pure evasive control.

For the DL experiments, we use 10-fold cross-validation for each model-dataset-method combination, compute 95\% \emph{confidence intervals} (CI) with mean $\pm$ 1.96 $\times$ standard error, and assess statistical significance with the \emph{Wilcoxon Signed-rank Test} (WSRT)~\citep{rainio2024evaluation,demvsar2006statistical}.
For the DRL experiments, we conduct 5 trials for each environment-method combination, analyze the results with \emph{Stratified Bootstrap Confidence Intervals} (SBCI), and assess statistical significance with the \emph{Probability of Improvement} (POI)~\citep{agarwal2021deep}.
Both SBCI and POI are computed using the open-source library by~\citet{agarwal2021deep}.
More details on the statistical methods and experimental settings can be found in~\autoref{app:sem} and~\autoref{app:esd} respectively.

\subsection{Attention Alignment}
We quantitatively assess attention alignment by measuring the extent to which model attention overlaps with ground-truth discriminative regions by computing their \emph{Intersection-over-Union} (IoU), a well-established metric widely used in computer vision tasks.
This addresses two key limitations in recent work.
The first is the lack of quantitative evaluation for attention alignment~\citep{yang2026learning,nautiyal2025paric}, potentially favoring methods that produce visually compelling explanations~\citet{adebayo2018sanity}.
The second is the reliance on the pointing game experiment~\citep[Figure 5]{petryk2022guiding}, which considers attention correct as long as the pixel with the highest attention lies within ground-truth discriminative regions, thereby overlooking attention in spurious regions.

\begin{table}[t]
    \centering
    \begin{threeparttable}
    \caption{Comparison of attention alignment between \camal and Prior in DL. The evaluation metric is test IoU (mean $\pm$ standard deviation). Results in bold denote cases where \camal achieves a higher mean than Prior, with the difference being statistically significant under WSRT ($p < 0.05$).}
        \begin{tabular}{@{}lrrrrrr@{}}
            \toprule
            \multirow{2.6}{*}{Model} & \multicolumn{2}{c}{BUSI} & \multicolumn{2}{c}{Oxford-IIIT-Pet} & \multicolumn{2}{c}{Oxford-Flower} \\
            \cmidrule(lr){2-3} \cmidrule(lr){4-5} \cmidrule(lr){6-7}
            & \multicolumn{1}{c}{CAMAL} & \multicolumn{1}{c}{Prior} & \multicolumn{1}{c}{CAMAL} & \multicolumn{1}{c}{Prior} & \multicolumn{1}{c}{CAMAL} & \multicolumn{1}{c}{Prior}\\
            \midrule
            ResNet & \textbf{33.8 $\pm$ 1.9} & 2.2 $\pm$ 0.6 & \textbf{67.1 $\pm$ 1.5} & 2.5 $\pm$ 0.6 & \textbf{69.4 $\pm$ 1.1} & 5.3 $\pm$ 0.6 \\

            ConvNeXt & \textbf{35.6 $\pm$ 3.8} & 2.7 $\pm$ 0.5 & \textbf{71.3 $\pm$ 0.6} & 3.8 $\pm$ 0.5 & \textbf{71.0 $\pm$ 1.1} & 9.9 $\pm$ 2.0 \\
            
            EfficientNet & \textbf{35.9 $\pm$ 2.9} & 2.6 $\pm$ 0.8 & \textbf{71.3 $\pm$ 0.7} & 4.4 $\pm$ 0.8 & \textbf{71.7 $\pm$ 0.8} & 9.6 $\pm$ 0.7 \\
            
            ViT & \textbf{42.8 $\pm$ 5.5} & 1.8 $\pm$ 1.1 & \textbf{80.2 $\pm$ 0.4} & 3.2 $\pm$ 0.4 & \textbf{80.7 $\pm$ 0.9} & 10.1 $\pm$ 1.0 \\
            
            Swin & \textbf{38.0 $\pm$ 3.0} & 2.8 $\pm$ 0.6 & \textbf{73.2 $\pm$ 0.4} & 5.1 $\pm$ 0.4 & \textbf{72.3 $\pm$ 0.6} & 12.5 $\pm$ 1.0 \\
            
            MaxViT & \textbf{38.7 $\pm$ 1.5} & 2.6 $\pm$ 0.4 & \textbf{74.0 $\pm$ 0.4} & 4.8 $\pm$ 1.1 & \textbf{72.3 $\pm$ 0.7} & 11.5 $\pm$ 0.7 \\
            \bottomrule
        \end{tabular}
    \label{tab:cpa}
    \end{threeparttable}
\end{table}
\begin{figure}[t]
    \centering
    \includegraphics[width=0.9\linewidth,trim=0cm 0cm 0cm 0cm,clip]{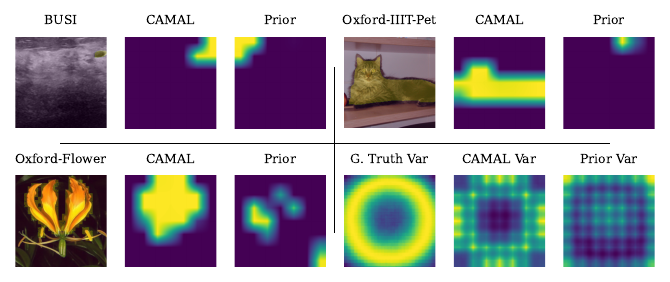}
    \caption{Comparison of attention alignment between \camal and Prior in DL. For each dataset, we present an image overlaid with its ground-truth discriminative regions, alongside the corresponding attention from \camal and Prior. A single representative example per dataset is shown for brevity; however, similar patterns are consistently observed across datasets.}
    \label{fig:cpa}
\end{figure}
Quantitative and qualitative results for the DL experiments are shown in~\autoref{tab:cpa} and~\autoref{fig:cpa} respectively.
\camal consistently outperforms Prior across all models and datasets, achieving substantially higher IoU scores, with all improvements being statistically significant under WSRT.
As shown in~\autoref{fig:cpa}, \camal focuses more on discriminative regions, while Prior often attends to spurious regions.
The bottom-right panel further illustrates this difference by visualizing the spatial variance of attention across datasets.
\camal exhibits structured variance concentrated around object boundaries, aligning with the ground truth. This produces a roughly circular pattern, as objects are typically centered and variations occur primarily along their edges.
In contrast, Prior concentrates variance in peripheral regions, suggesting that attention is often allocated to spurious areas.
We omit the DL comparison with Vanilla for brevity, as it exhibits quantitatively and qualitatively similar trends, with \camal consistently outperforming Vanilla (see~\autoref{app:eed} for full results).

\begin{figure*}[t]
    \centering
    \hspace{0.08\linewidth}
    \begin{subfigure}{0.40\linewidth}
    \includegraphics[width=0.8\linewidth,trim=0cm 0cm 0cm 0cm,clip]{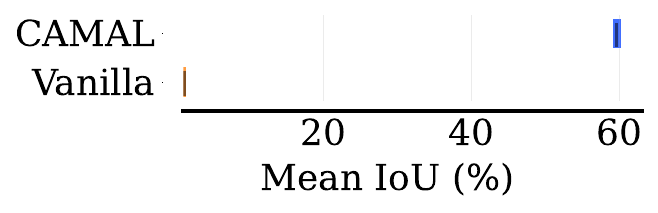}
    \end{subfigure}
    \begin{subfigure}{0.5\linewidth}
    \includegraphics[width=0.8\linewidth,trim=0cm 0cm 0cm 0cm,clip]{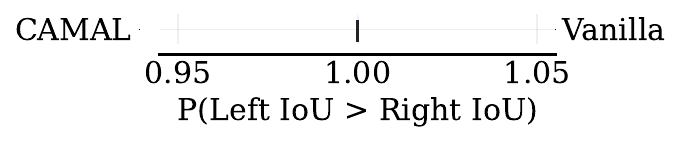}
    \end{subfigure}
    \caption{Comparison of attention alignment between \camal and Vanilla in DRL. The left figure uses SBCI to statistically analyze the results across environments and trials while the right figure uses POI to determine if the observed differences are statistically significant ($P(\text{Left IoU}>\text{Right IoU}) > 0.5 \; \land \; 0.5 \notin CI$). The shaded regions denote 95\% confidence bands.}
    \label{fig:cvda}
\end{figure*}
Quantitative results for the DRL experiments are shown in~\autoref{fig:cvda} (see~\autoref{app:eed} for qualitative results).
Under SBCI analysis, where results across multiple environments and trials are repeatedly sampled and aggregated to obtain point estimates with corresponding 95\% CIs, \camal achieves a substantially higher mean IoU than Vanilla, indicating improved attention alignment under the PPO objective.
Furthermore, the POI analysis confirms that this difference in IoU is statistically significant, with both the point estimate and the lower bound of the CI exceeding 0.5.
We omit including a comparison with Prior for DRL as its reliance on pseudo-discriminative regions makes it impractical in an online learning setting, where such regions will need to be recomputed for newly generated observations, significantly increasing inference overhead.

\begin{tcolorbox}[sharp corners]
    \textbf{Key takeaway:} These findings show that \camal reliably yields better-aligned attention in both DL and DRL, consistently outperforming Prior and Vanilla, underscoring that aligned attention is best achieved when attention supervision is grounded in trustworthy regions.
\end{tcolorbox}

\subsection{Attention Faithfulness}
We quantitatively assess attention faithfulness to determine whether the model's attention reflects features that truly influence its predictions.
Specifically, we assess the causal impact of attention by progressively \emph{removing} and \emph{inserting} pixels in order of their attention-attributed importance and measuring the resulting change in model confidence (\ie softmax output).
Here, removal starts from the original image and progressively masks pixels to zero in order of importance, while insertion starts from an all-zero image and progressively restores pixels in the same order.
If the attention is faithful, removing highly attended regions should significantly decrease model confidence, while inserting them should significantly increase it.
This evaluation protocol follows well-established practices in \emph{Explainable Artificial Intelligence} (XAI)~\citep{fong2019understanding,Petsiuk2018rise,zeiler2014visualizing}; however, similar to attention alignment, recent work on attention supervision has yet to incorporate such quantitative evaluations of attention faithfulness.

\begin{figure}[t]
    \centering
    \includegraphics[width=0.9\linewidth,trim=0cm 0cm 0cm 0cm,clip]{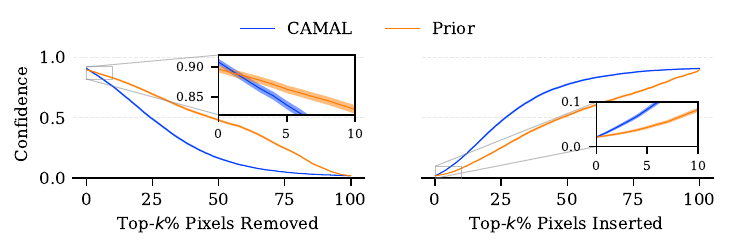}
    \caption{Comparison of attention faithfulness between \camal and Prior in DL. Model confidence is evaluated under removal and insertion of the top-$k$\% most important pixels determined by the model's attention. Curves are aggregated over all model-dataset combinations, with shaded regions denoting 95\% confidence bands.}
    \label{fig:cpf}
\end{figure}
The results for the DL experiments are shown in~\autoref{fig:cpf}.
\camal consistently exhibits more faithful attention than Prior under both removal and insertion.
Under removal, model confidence drops substantially faster for \camal, indicating that its attention more accurately identifies features critical to the prediction.
In contrast, Prior shows a more gradual decline, suggesting that its attention is less concentrated on truly influential regions.
This is further reflected in the \emph{area under the curve} (AUC), where \camal achieves 39\% lower AUC than Prior.
Under insertion, \camal achieves a faster recovery in model confidence as pixels are restored, further demonstrating that its attention captures features that are causally relevant to the model's output.
Consistently, \camal attains 35.3\% higher AUC than Prior under insertion.
These trends are most pronounced where $k < 50$, which typically corresponds to the object region; this expected behavior is clearly followed by \camal but not by Prior.
Similar to attention alignment, we omit the DL comparison with Vanilla for brevity, as it exhibits similar trends, with \camal consistently outperforming Vanilla (see~\autoref{app:eed} for full results).
Furthermore, we do not include a comparison for DRL here as the model's output may depend on future states, making per-frame removal and insertion less meaningful.

\begin{tcolorbox}[sharp corners]
    \textbf{Key takeaway:} These findings show that \camal reliably yields more faithful attention, consistently outperforming Prior and Vanilla, further demonstrating that faithful attention is best achieved when attention supervision is grounded in trustworthy regions.
\end{tcolorbox}

\subsection{Generalization Performance}
We quantitatively assess whether attention supervision improves generalization reliably by evaluating classification accuracy for DL and mean return for DRL.
\begin{table}[t]
    \centering
    \begin{threeparttable}
    \caption{Comparison of generalization between \camal and Vanilla in DL. The evaluation metric is test accuracy (mean $\pm$ standard deviation). Results in bold denote cases where \camal achieves a higher mean than Vanilla, with the difference being statistically significant under WSRT ($p < 0.05$). Results that are underlined denote cases where \camal achieves a higher mean without statistical significance.}
        \begin{tabular}{@{}lrrrrrr@{}}
            \toprule
            \multirow{2.6}{*}{Model} & \multicolumn{2}{c}{BUSI} & \multicolumn{2}{c}{Oxford-IIIT-Pet} & \multicolumn{2}{c}{Oxford-Flower} \\
            \cmidrule(lr){2-3} \cmidrule(lr){4-5} \cmidrule(lr){6-7}
            & \multicolumn{1}{c}{CAMAL} & \multicolumn{1}{c}{Vanilla} & \multicolumn{1}{c}{CAMAL} & \multicolumn{1}{c}{Vanilla} & \multicolumn{1}{c}{CAMAL} & \multicolumn{1}{c}{Vanilla}\\
            \midrule
            ResNet & \textbf{84.1 $\pm$ 3.0} & 80.1 $\pm$ 3.7 & \textbf{78.2 $\pm$ 1.5} & 73.9 $\pm$ 4.0 & \textbf{92.4 $\pm$ 1.0} & 90.1 $\pm$ 2.5 \\

            ConvNeXt & \textbf{83.8 $\pm$ 4.4} & 71.9 $\pm$ 2.5 & \textbf{81.5 $\pm$ 1.8} & 72.8 $\pm$ 3.8 & \textbf{93.5 $\pm$ 1.7} & 92.6 $\pm$ 1.1 \\
            
            EfficientNet & \underline{85.6 $\pm$ 4.0} & 83.3 $\pm$ 6.3 & \textbf{85.6 $\pm$ 1.2} & 83.4 $\pm$ 2.2 & \underline{95.6 $\pm$ 1.0} & 94.9 $\pm$ 1.5 \\
            
            ViT & \textbf{82.3 $\pm$ 2.8} & 70.3 $\pm$ 10.7 & 84.0 $\pm$ 3.9 & 86.6 $\pm$ 3.1 & \textbf{93.8 $\pm$ 1.0} & 88.6 $\pm$ 9.9 \\
            
            Swin & \textbf{87.8 $\pm$ 3.2} & 85.3 $\pm$ 5.5 & \textbf{90.8 $\pm$ 1.3} & 89.4 $\pm$ 1.6 & \underline{97.5 $\pm$ 1.1} & 97.0 $\pm$ 1.2 \\
            
            MaxViT & \underline{89.6 $\pm$ 3.0} & 89.0 $\pm$ 2.3 & \underline{93.9 $\pm$ 1.2} & 93.8 $\pm$ 0.9 & 98.2 $\pm$ 0.7 & 98.6 $\pm$ 0.5 \\
            \bottomrule
        \end{tabular}
    \label{tab:cvg}
    \end{threeparttable}
\end{table}
The results for the DL experiments against Vanilla are shown in~\autoref{tab:cvg}.
\camal consistently outperforms Vanilla, achieving statistically significant improvements in classification accuracy across the majority of the model-dataset combinations.
Even in cases where improvements are not statistically significant, \camal remains competitive, never exhibiting substantial degradation relative to Vanilla.
This shows that attention supervision is indeed effective in improving generalization, but its impact may vary across architectures and datasets, potentially due to how features and spatial dependencies are represented.
In such cases, the task objective alone may already be sufficient to learn reasonably well-aligned and faithful representations, thereby reducing the benefit of attention supervision.

\begin{figure*}[t]
    \centering
    \hspace{0.08\linewidth}
    \begin{subfigure}{0.40\linewidth}
    \includegraphics[width=0.8\linewidth,trim=0cm 0cm 0cm 0cm,clip]{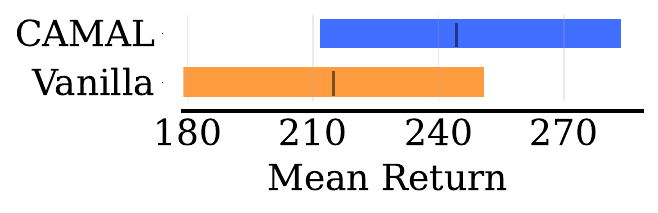}
    \end{subfigure}
    \begin{subfigure}{0.5\linewidth}
    \includegraphics[width=0.8\linewidth,trim=0cm 0cm 0cm 0cm,clip]{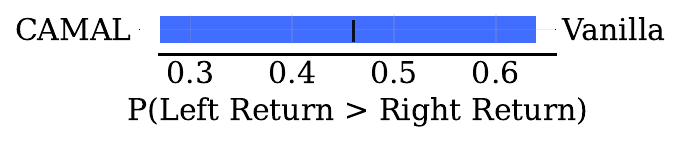}
    \end{subfigure}
    \caption{Comparison of generalization between \camal and Vanilla in DRL. The left figure uses SBCI to statistically analyze the results across environments and trials while the right figure uses POI to determine if the observed differences are statistically significant ($P(\text{Left Return}>\text{Right Return}) > 0.5 \; \land \; 0.5 \notin CI$). The shaded regions denote 95\% confidence bands.}
    \label{fig:cvdg}
\end{figure*}
The results for the DRL experiments against Vanilla are shown in~\autoref{fig:cvdg}.
Using SBCI analysis, \camal attains a higher mean return than Vanilla across environments and trials.
This is, however, not statistically significant under POI, indicating that while \camal shows a trend towards improved performance, the gains are not consistently reliable across environments.
This suggests that attention supervision requires further refinement in more complex tasks, such as accounting for the underlying dynamics and reward structure in DRL.
\begin{table}[t]
    \centering
    \begin{threeparttable}
    \caption{Comparison of generalization between \camal and Prior in DL. The evaluation metric is test accuracy (mean $\pm$ standard deviation). Results in bold denote cases where \camal achieves a higher mean than Prior, with the difference being statistically significant under WSRT ($p < 0.05$). Results that are underlined denote cases where \camal achieves a higher mean without statistical significance.}
        \begin{tabular}{@{}lrrrrrr@{}}
            \toprule
            \multirow{2.6}{*}{Model} & \multicolumn{2}{c}{BUSI} & \multicolumn{2}{c}{Oxford-IIIT-Pet} & \multicolumn{2}{c}{Oxford-Flower} \\
            \cmidrule(lr){2-3} \cmidrule(lr){4-5} \cmidrule(lr){6-7}
            & \multicolumn{1}{c}{CAMAL} & \multicolumn{1}{c}{Prior} & \multicolumn{1}{c}{CAMAL} & \multicolumn{1}{c}{Prior} & \multicolumn{1}{c}{CAMAL} & \multicolumn{1}{c}{Prior}\\
            \midrule
            ResNet & \underline{83.8 $\pm$ 3.2} & 82.1 $\pm$ 4.4 & \textbf{77.8 $\pm$ 2.6} & 73.1 $\pm$ 1.5 & \textbf{92.3 $\pm$ 1.0} & 89.5 $\pm$ 1.8 \\

            ConvNeXt & \textbf{82.3 $\pm$ 4.3} & 76.0 $\pm$ 3.5 & \textbf{80.6 $\pm$ 2.1} & 75.5 $\pm$ 2.4 & \underline{92.9 $\pm$ 2.4} & 91.7 $\pm$ 2.5 \\
            
            EfficientNet & \underline{87.2 $\pm$ 5.1} & 85.0 $\pm$ 4.2 & 85.4 $\pm$ 1.4 & 85.7 $\pm$ 1.3 & 95.6 $\pm$ 0.8 & 96.1 $\pm$ 1.1 \\
            
            ViT & \underline{83.6 $\pm$ 5.4} & 82.6 $\pm$ 5.2 & 85.5 $\pm$ 1.7 & 85.9 $\pm$ 1.5 & 92.5 $\pm$ 2.9 & 94.2 $\pm$ 1.1 \\
            
            Swin & 85.5 $\pm$ 4.5 & 85.5 $\pm$ 4.4 & 90.6 $\pm$ 1.4 & 91.5 $\pm$ 1.1 & 97.3 $\pm$ 0.9 & 98.1 $\pm$ 0.4 \\
            
            MaxViT & 89.9 $\pm$ 3.9 & 90.3 $\pm$ 3.4 & 94.1 $\pm$ 1.3 & 94.3 $\pm$ 1.1 & \underline{98.8 $\pm$ 0.3} & 98.6 $\pm$ 0.4 \\
            \bottomrule
        \end{tabular}
    \label{tab:cpg}
    \end{threeparttable}
\end{table}
Finally, the results for the DL experiments against Prior are shown in~\autoref{tab:cpg}.
\camal performs on par with Prior across models and datasets, achieving higher classification accuracies in half of the model-dataset combinations.

\begin{tcolorbox}[sharp corners]
    \textbf{Key takeaway:} These findings show that \camal is comparable to Prior and often outperforms Vanilla in generalization, suggesting that attention supervision is beneficial for generalization regardless of how reference regions are obtained.
    Furthermore, the comparable performance between \camal and Prior in generalization suggests that, among methods incorporating attention supervision, the primary gains may lie more in improving attention quality than in substantially increasing task performance.
\end{tcolorbox}

\section{Conclusion}
We have proposed \camal, a principled approach to attention supervision that leverages segmentation masks as trustworthy reference regions to guide model attention.
In addition, we introduced an efficient batch-level attention extraction method that enables scalable supervision.
\camal consistently improves attention alignment and faithfulness, while yielding improved or comparable generalization performance.
Importantly, our results suggest that the primary benefits of attention supervision lie in improving the quality of learned attention, with gains in task performance being more modest and context-dependent.
Taken together, these findings highlight that realizing the full potential of attention supervision requires grounding it in trustworthy reference regions, establishing a reliable baseline for future attention supervision methods.

\bibliographystyle{plainnat}
\bibliography{references}


\appendix
\section{Statistical Evaluation Methods}
\label{app:sem}
\paragraph{Wilcoxon Signed-rank Test.}
\citet{rainio2024evaluation} and~\citet{demvsar2006statistical} both proposed using the \emph{Wilcoxon Signed-rank Test} (WSRT)---a non-parametric test---to statistically analyze multiple paired trials between two DL algorithms for more reliable claims.
In particular, given $k$-fold cross-validation where two algorithms use the same dataset split each fold, WSRT tests if the median of the differences in the $k$ paired trials $d_i$, where $i \in \{1,...,k\}$, is zero with the following test statistic:
\begin{equation}
   T = min(R^+, R^-), \;\text{where}\; R^+ = \sum_{d_i > 0} rank(|d_i|) \;\; \text{and} \;\; R^- = \sum_{d_i < 0} rank(|d_i|)
\end{equation}
Here, the absolute differences $|d_i|$ are first ranked collectively.
Subsequently, the ranks are summed for $d_i > 0$ and $d_i < 0$ individually.
The minimum is then taken as the test statistic to be compared with the respective critical values for WSRT.
In this work, we use $10$-fold cross-validation for all DL experiments and one-tailed WSRT to evaluate whether \camal outperforms each DL baseline. 
\begin{figure}[h]
    \centering
    \includegraphics[width=\linewidth,trim=1cm 0.2cm 0.7cm 0.4cm,clip]{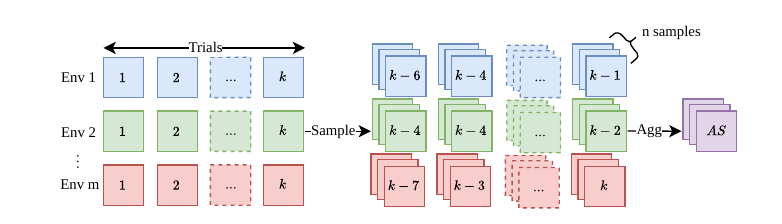}
    \caption{The SBCI evaluation method involves sampling a DRL algorithm's result set repeatedly and aggregating (Agg) the samples to form a distribution of aggregated samples (AS) that estimates the algorithm's true performance.}
    \label{fig:sbci}
\end{figure}
\paragraph{Stratified Bootstrap Confidence Intervals.}
DRL is inherently stochastic and more robust statistical methods are required to reliably compare two DRL algorithms.
\citet{agarwal2021deep} proposed \emph{Stratified Bootstrap Confidence Intervals} (SBCI) to compare two DRL algorithms across multiple environments and trials, illustrated in~\autoref{fig:sbci}.
Given a $m \times k$ result set for a DRL algorithm trained on $m$ environments $k$ times, SBCI samples $n$ sets of shape $m \times k$ from the result set with replacement, with equal proportions per environment.
The $n$ sets are then individually aggregated according to an aggregation metric to form a distribution of aggregated samples with a 95\% CI that reliably estimates the true performance of the DRL algorithm.
In this work, we use SBCI with the mean aggregation metric, where $m=4$ and $k=5$, for all DRL experiments to evaluate whether \camal outperforms each DRL baseline, unless otherwise specified.

\paragraph{Probability of Improvement.}
To determine whether the observed differences under SBCI with mean aggregation are statistically significant, we further conduct SBCI with the \emph{Probability of Improvement} (POI) aggregation metric~\citep{agarwal2021deep}.
POI performs pairwise comparisons using the Mann-Whitney U-statistic~\citep{mann1947test}:
\begin{equation}
    \label{eq:mwu}
   P(X_m > Y_m) = \frac{1}{k^2} \sum_{i=1}^{k}\sum_{j=1}^{k} \mathbb{I}(x_{m,i},y_{m,j}),\;\;
   \text{where} \;\; \mathbb{I}(x,y) = 
   \begin{cases}
       1, & \text{if}\;\; x>y, \\
       \frac{1}{2}, & \text{if}\;\; x=y, \\
       0, & \text{if}\;\; x<y.
   \end{cases}
\end{equation}
Here, $P(X_m>Y_m)$ denotes the POI of algorithm X over Y on environment $m$, with $k$ denoting the number of trials.
The intuition behind the pairwise comparisons being that for X to be considered better than Y, it must outperform Y with sufficient frequency.
$P(X > Y)$ is then obtained by aggregating the environment specific POI's $\frac{1}{M}\sum_{m=1}^{M}P(X_m>Y_m)$.
Finally, the resulting point estimate and confidence band is inspected via the Neyman-Pearson testing criterion~\citep{bouthillier2021accounting,neyman1928use}, where the X is statistically significantly better than Y if $P(X>Y) > 0.5 \; \land \; 0.5 \notin CI$.

\section{Extended Experiments and Discussion}
\label{app:eed}
\begin{figure}[h]
    \centering
    \includegraphics[width=\linewidth,trim=0cm 0cm 0cm 0cm,clip]{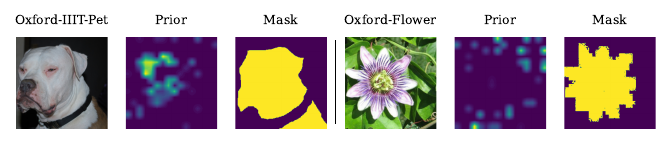}
    \caption{Comparison between pseudo-discriminative regions derived from an external prior (CLIP) and ground-truth discriminative regions obtained from segmentation masks for two datasets. Compared to the ground-truth, the pseudo-discriminative regions are (1) fragmented and incomplete (left panel), and (2) spurious (right panel). For brevity, we show two examples; however, similar patterns are observed throughout both datasets.}
    \label{fig:pseudo}
\end{figure}
\paragraph{Pilot study.}
To assess the trustworthiness of the reference regions projected by pretrained zero-shot models (\ie pseudo-discriminative regions), we conducted a pilot study with CLIP to visualize them.
In particular, we first selected two vision datasets with segmentation masks (\ie ground-truth discriminative regions), Oxford-IIIT-Pet and Oxford-Flower.
We selected these two datasets as they exhibit high inter-class similarity, where distinguishing between classes depends on subtle features.
Importantly, this setting allows us to assess whether CLIP---being trained on broad classification rather than spatial localization---is capable of capturing the fine-grained distinctions necessary to localize such similar classes.
Subsequently, similar to~\citet{yang2026learning},~\citet{nautiyal2025paric}, and~\citet{petryk2022guiding}, we derived pseudo-discriminative regions for images in the two datasets by passing them through CLIP and employing Grad-CAM on the image encoder.
In~\autoref{fig:pseudo}, we show that contrary to the assumptions of the aforementioned work, when compared to the ground-truth, these pseudo-discriminative regions can be (1) fragmented and incomplete (left panel), and (2) spurious (right panel).
Thus, the reference regions projected by CLIP are poor proxies for ground-truth discriminative regions and untrustworthy.

Furthermore, this finding motivates a broader question: where should vision models attend to when making decisions?
While some works suggest that vision models only require a subset of ground-truth discriminative regions for correct decisions~\citep{li2022expansion,fong2017interpretable}, other works contend that this can lead to over-reliance and reliable decisions require reasoning over the entire object region~\citep{zhang2018adversarial,Singh_2017_ICCV}.
In this work, we adopt the latter perspective, arguing that for attention supervision to be reliable, it needs to be grounded in trustworthy regions, that is, regions that are well-structured, complete, and accurate.

\paragraph{Limitations of recent work.}
Beyond their reliance on pseudo-discriminative regions, recent work in attention supervision exhibit several additional limitations~\citep{yang2026learning,nautiyal2025paric,petryk2022guiding}.
First, their regularization formulations are often sub-optimal, focusing solely on suppressing attention in spurious regions, and thus fail to capture the full extent of discrepancy between model attention and reference regions.
Second, they do not formally address scalability, particularly when employing gradient-based attention extraction techniques that incur significant overhead.
Third, they often do not quantitatively evaluate attention alignment and faithfulness even though their primary focus is to improve attention, but instead focus on evaluating generalization.
Fourth, their empirical evaluations lack statistical rigor with conclusions often drawn without formal statistical tests.
Finally, their applicability is largely restricted to standard supervised learning settings, with limited exploration in other settings.
Building on the foundations of recent work, our work addresses these limitations by introducing a principled regularization objective, a scalable formulation for it, and a statistically rigorous, attention-focused evaluation across multiple learning settings.

\begin{table}[h]
    \centering
    \begin{threeparttable}
    \caption{Comparison of Spearman and Pearson correlation scores (mean $\pm$ standard deviation) between regularizers. Results in bold denote cases where CAMAL achieves a higher mean than the suppress-only regularizer, with the difference being statistically significant under WSRT ($p<0.05$).}
        \begin{tabular}{@{}llrrrr@{}}
            \toprule
            \multirow{2.6}{*}{Dataset} & \multirow{2.6}{*}{Perturbation} & \multicolumn{2}{c}{Spearman} & \multicolumn{2}{c}{Pearson} \\
            \cmidrule(lr){3-4} \cmidrule(lr){5-6}
            & & \multicolumn{1}{c}{CAMAL} & \multicolumn{1}{c}{Supp. Only} & \multicolumn{1}{c}{CAMAL} & \multicolumn{1}{c}{Supp. Only}\\
            \midrule
            & Shift & \textbf{0.98 $\pm$ 0.07} & 0.50 $\pm$ 0.46 & \textbf{0.96 $\pm$ 0.05} & 0.59 $\pm$ 0.41 \\
            Oxford-IIIT-Pet & Erode & \textbf{1.00 $\pm$ 0.01} & --- & \textbf{0.99 $\pm$ 0.01} & --- \\
            & Dilate & 1.00 $\pm$ 0.01 & 1.00 $\pm$ 0.01 & 1.00 $\pm$ 0.01 & 1.00 $\pm$ 0.01 \\
            \cmidrule{1-6}
            & Shift & \textbf{0.97 $\pm$ 0.10} & 0.50 $\pm$ 0.45 & \textbf{0.94 $\pm$ 0.08} & 0.59 $\pm$ 0.39 \\
            Oxford-Flower & Erode & \textbf{1.00 $\pm$ 0.00} & --- & \textbf{1.00 $\pm$ 0.01} & --- \\
            & Dilate & 1.00 $\pm$ 0.02 & 1.00 $\pm$ 0.02 & 0.99 $\pm$ 0.03 & 0.99 $\pm$ 0.03 \\
            \bottomrule
        \end{tabular}
    \label{tab:cpr}
    \end{threeparttable}
\end{table}
\begin{figure}[h]
    \centering
    \includegraphics[width=0.9\linewidth,trim=0cm 0cm 0cm 0cm,clip]{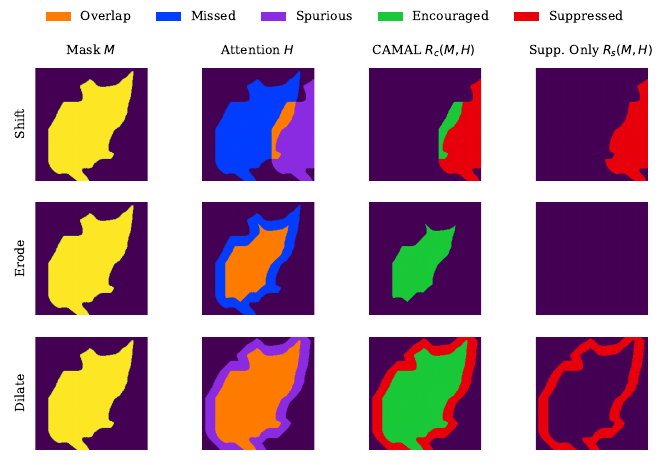}
    \caption{Comparison of regularizers under different perturbations---shift, erode, and dilate. For each perturbation we visualize the segmentation mask $M$, attention $H$, and corresponding regularizer responses $R(M,H)$.
    Colors denote overlap between ground truth regions and attention ($M \cap H$, orange), missed ground-truth regions ($M \setminus H$, blue), attention in spurious regions ($H \setminus M$, purple), with green and red indicating regions encouraged and suppressed by the regularizers respectively.
    A single representative example for each perturbation is shown for brevity.}
    \label{fig:cpr}
\end{figure}
\paragraph{Regularizer comparison.}
To compare the robustness and sensitivity of CAMAL's regularizer with the regularizers commonly used in recent work that focus solely on suppressing attention in spurious regions~\citep{nautiyal2025paric,petryk2022guiding}, we conducted a perturbation-based experiment that isolates their response to controlled deviations from ground-truth discriminative regions.
In particular, we perturb segmentation mask $M$ to simulate model attention $H$ and compare regularizers by evaluating $R(M,H)$.
We used three types of perturbations---spatial shifts, morphological erosion, and dilation---to systematically induce attention misalignment, under-coverage, and over-coverage respectively.
Similar to the pilot study, we conduct this experiment on the Oxford-IIIT-Pet and Oxford-Flower datasets, which provide high-quality segmentation masks for reliable evaluation.

For each image and perturbation type, we generate a sequence of perturbed masks $\{{H_i}\}_{i=1}^{n}$, progressively increasing in severity to simulate attention.
Subsequently, for each image, perturbation type, and regularizer $R$, we evaluate $R$ on all corresponding $(M, H_i)$ pairs to obtain sequence $\{R(M, H_i)\}_{i=1}^n$.
Lastly, we compute Spearman and Pearson correlations between perturbation severity and regularizer response.
We summarize the results in~\autoref{tab:cpr} and illustrate the behavior of each perturbation and regularizer in~\autoref{fig:cpr}.

Overall, the results in~\autoref{tab:cpr} show that \camal exhibits a consistent near-monotonic response to increasing perturbation severity, achieving strong correlations across all perturbation types.
In contrast, the \emph{suppress-only} regularizers used in recent work fail to respond meaningfully when attention deviates within ground-truth regions (\ie erode) and display weak or unstable correlations under spatial misalignment (\ie shift).
This behavior is further illustrated in~\autoref{fig:cpr}, where \camal simultaneously encourages attention within ground-truth regions and suppresses attention in spurious regions, while suppress-only regularizers do only the latter.
Thus, the suppress-only regularizers used in recent work lack any mechanism to recover missing evidence.
On the other hand, \camal captures the full spectrum of attention allocation, making it a fundamentally more complete and robust attention supervision objective.

\begin{proposition}[\textbf{Equivalence of batch-level attention extraction}]
    \label{prop1}
    Let $b \in \{1,...,B\}$ denote samples in a mini-batch of size $B$, let $y_b^{c_b}$ denote the scalar logit for the ground-truth class $c_b$ for sample $b$, and let $A_{b}^k$ denote the $k$-th feature map for sample $b$.
    Assume that the model processes each sample independently along the batch dimension so that for any $b \neq b'$, 
    \[
    \frac{\partial y_b^{c_b}}{\partial A^k_{b'}} = 0.
    \]
    Then, performing one backward pass on the summed scalar
    $\sum_{b} y_b^{c_b}$
    yields the same gradients as computing from each $y_b^{c_b}$ independently with $B$ backward passes.
\end{proposition}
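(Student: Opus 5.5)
The argument rests on linearity of differentiation plus the independence assumption. Fix a sample $b'$, a channel $k$ and a spatial location $(i,j)$. The quantity needed to form the batch-level Grad-CAM weight is the partial derivative of $S=\sum_{b} y_b^{c_b}$ with respect to $A^k_{b',ij}$. Because differentiation is linear, this splits as
\begin{equation*}
\frac{\partial S}{\partial A^k_{b',ij}} = \sum_{b=1}^{B} \frac{\partial y_b^{c_b}}{\partial A^k_{b',ij}}.
\end{equation*}
The hypothesis of \autoref{prop1} says every term with $b\neq b'$ is zero. Only the $b=b'$ term survives, so $\partial S/\partial A^k_{b',ij} = \partial y_{b'}^{c_{b'}}/\partial A^k_{b',ij}$. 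That right-hand side is exactly what an isolated backward pass from $y_{b'}^{c_{b'}}$ returns at that entry.

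Next I would transfer this entrywise identity to the attention maps. The reformulated weight $w_k$ for sample $b'$ is the spatial average $\frac{1}{HW}\sum_{ij}$ of these entries, so it equals the per-sample weight. The Grad-CAM map of \autoref{eq:gc} is then a fixed function of the weights $w_k$ and the forward activations $A^k_{b'}$: a linear combination followed by ReLU. The forward activations are the same in both procedures, because a backward pass does not change them. Hence $G^{c_{b'}}$ coincides as well. The same reasoning covers the other gradient-based techniques mentioned, since each is a deterministic function of the same per-sample gradients and activations.

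I would then address the backward-pass count. Reverse-mode automatic differentiation from the scalar $S$ computes the entire gradient $\nabla_A S$, for all $b'$ at once, in a single pass. The $B$ separate passes each produce only one slice. This gives the claimed $O(1)$ versus $O(B)$ count of backward passes.

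The main obstacle is the independence hypothesis itself. It fails for layers that couple samples, most notably BatchNorm in training mode, whose batch statistics make $y_b$ depend on $A_{b'}$. I would note this explicitly and observe that the result holds whenever such layers act per sample, for example BatchNorm in evaluation mode, LayerNorm, or GroupNorm. The hypothesis is stated as an assumption, so the proof itself only needs the linearity step above.
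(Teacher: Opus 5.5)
Your proof is correct and follows the paper's argument. Linearity splits the gradient of the summed scalar into per-sample terms, the independence hypothesis removes the cross terms $b \neq b'$, and the resulting CAMs agree because they are deterministic functions of the same gradients and forward activations. (Your BatchNorm caveat only applies when such a layer sits downstream of the chosen feature map $A^k$.)
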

\begin{proof}
    By the linearity of differentiation, the gradients of the summed scalar with respect to the $k$-th feature map of sample $b'$ can be expanded as
    \begin{equation}
    \frac{\partial}{\partial A_{b'}^k} \Bigg(\sum_b y_b^{c_b} \Bigg)    
    =\sum_b \frac{\partial y_b^{c_b}}{\partial A_{b'}^k}
    =\frac{\partial y_{b'}^{c_{b'}}}{\partial A_{b'}^k}
    ,\;\text{since}\;
    \frac{\partial y_b^{c_b}}{\partial A^k_{b'}} = 0 \; \forall\; b \neq b'
    \end{equation}
    Therefore, the gradient for each sample obtained from the summed scalar is identical to the gradient obtained from each individual scalar $y_b^{c_b}$ independently.
    Consequently, since attention is computed deterministically from these gradients and the corresponding feature maps, the resulting CAMs are identical under both approaches.
\end{proof}

\begin{table}[h]
    \centering
    \begin{threeparttable}
    \caption{Comparison of attention alignment between \camal and Vanilla in DL. The evaluation metric is test IoU (mean $\pm$ standard deviation). Results in bold denote cases where \camal achieves a higher mean than Vanilla, with the difference being statistically significant under WSRT ($p < 0.05$).}
        \begin{tabular}{@{}lrrrrrr@{}}
            \toprule
            \multirow{2.6}{*}{Model} & \multicolumn{2}{c}{BUSI} & \multicolumn{2}{c}{Oxford-IIIT-Pet} & \multicolumn{2}{c}{Oxford-Flower} \\
            \cmidrule(lr){2-3} \cmidrule(lr){4-5} \cmidrule(lr){6-7}
            & \multicolumn{1}{c}{CAMAL} & \multicolumn{1}{c}{Vanilla} & \multicolumn{1}{c}{CAMAL} & \multicolumn{1}{c}{Vanilla} & \multicolumn{1}{c}{CAMAL} & \multicolumn{1}{c}{Vanilla}\\
            \midrule
            ResNet & \textbf{33.9 $\pm$ 3.2} & 14.4 $\pm$ 2.4 & \textbf{67.2 $\pm$ 1.3} & 10.8 $\pm$ 0.9 & \textbf{70.0 $\pm$ 0.7} & 9.6 $\pm$ 0.9 \\

            ConvNeXt & \textbf{36.1 $\pm$ 2.2} & 2.0 $\pm$ 0.5 & \textbf{71.2 $\pm$ 1.1} & 3.7 $\pm$ 0.2 & \textbf{71.3 $\pm$ 0.6} & 3.4 $\pm$ 0.2 \\
            
            EfficientNet & \textbf{36.9 $\pm$ 2.2} & 12.8 $\pm$ 1.1 & \textbf{71.3 $\pm$ 0.6} & 8.8 $\pm$ 0.3 & \textbf{71.8 $\pm$ 0.5} & 9.7 $\pm$ 0.4 \\
            
            ViT & \textbf{44.0 $\pm$ 2.5} & 4.2 $\pm$ 3.2 & \textbf{79.9 $\pm$ 1.3} & 7.9 $\pm$ 8.3 & \textbf{80.8 $\pm$ 0.5} & 5.6 $\pm$ 3.1 \\
            
            Swin & \textbf{38.3 $\pm$ 1.6} & 9.2 $\pm$ 4.3 & \textbf{73.4 $\pm$ 0.4} & 8.9 $\pm$ 5.2 & \textbf{72.5 $\pm$ 0.4} & 3.6 $\pm$ 2.3 \\
            
            MaxViT & \textbf{38.9 $\pm$ 2.2} & 2.8 $\pm$ 2.5 & \textbf{73.9 $\pm$ 0.4} & 7.5 $\pm$ 1.9 & \textbf{71.7 $\pm$ 1.3} & 1.8 $\pm$ 0.5 \\
            \bottomrule
        \end{tabular}
    \label{tab:cva}
    \end{threeparttable}
\end{table}
\begin{figure}[h]
    \centering
    \includegraphics[width=0.9\linewidth,trim=0cm 0cm 0cm 0cm,clip]{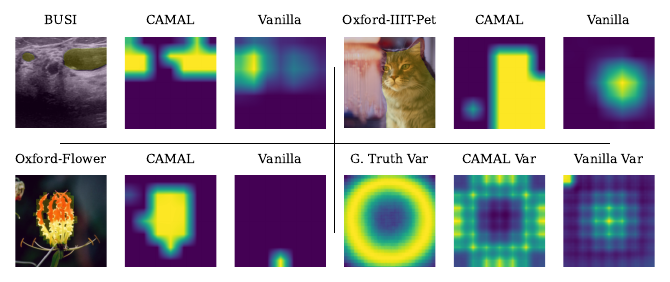}
    \caption{Comparison of attention alignment between \camal and Vanilla in DL. For each dataset, we present an image overlaid with its ground-truth discriminative regions, alongside the corresponding attention from \camal and Vanilla. A single representative example per dataset is shown for brevity; however, similar patterns are consistently observed across datasets.}
    \label{fig:cva}
\end{figure}
\begin{figure}[h]
    \centering
    \includegraphics[width=\linewidth,trim=0cm 0cm 0cm 0cm,clip]{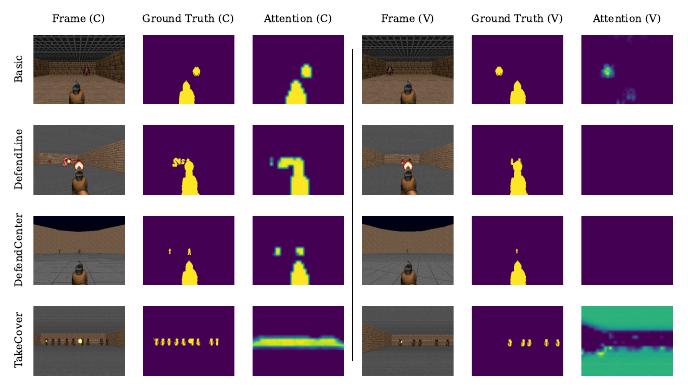}
    \caption{Comparison of attention alignment between \camal (C, left panel) and Vanilla (V, right panel) in DRL. For each environment, we present an image, its ground-truth discriminative regions, alongside the corresponding attention from \camal and Vanilla. A single frame for each environment is shown for brevity; however, similar patterns are consistently observed across episodes.}
    \label{fig:cvds}
\end{figure}
\paragraph{Attention alignment comparison with Vanilla.}
Quantitative and qualitative results for the DL experiments are shown in~\autoref{tab:cva} and~\autoref{fig:cva} respectively.
\camal consistently outperforms Vanilla across all models and datasets, achieving substantially higher IoU scores, with all improvements statistically significant under WSRT.
This difference is illustrated in~\autoref{fig:cva}, where \camal attends to the entire object region while Vanilla attends to incomplete or spurious regions.
This is further reflected in the bottom-right panel that visualizes the spatial variance of attention across datasets.
The variance by \camal closely resembles the ground truth, where variations occur primarily along object edges in a circular fashion.
In contrast, Vanilla exhibits variance in the top-left and center, suggesting that attention is often allocated to spurious regions or fails to adequately cover the object center.

Qualitative results for the DRL experiments are shown in~\autoref{fig:cvds}.
Across all environments, \camal produces attention that closely aligns with the ground truth, consistently highlighting relevant targets while suppressing background noise.
In contrast, Vanilla exhibits incomplete and misaligned attention, often deviating significantly from the ground truth.
For example, in TakeCover, \camal correctly ignores the background while Vanilla attends to the background.
These qualitative observations are consistent with the quantitative observations reported under SBCI and POI analysis in~\autoref{fig:cvda}.
Additional sample videos are provided in the source code to further illustrate the qualitative differences between \camal and Vanilla.
Overall, similar to the comparison with Prior, these results demonstrate that compared to Vanilla, \camal reliably yields better-aligned attention in both DL and DRL.

\begin{figure}[h]
    \centering
    \includegraphics[width=0.9\linewidth,trim=0cm 0cm 0cm 0cm,clip]{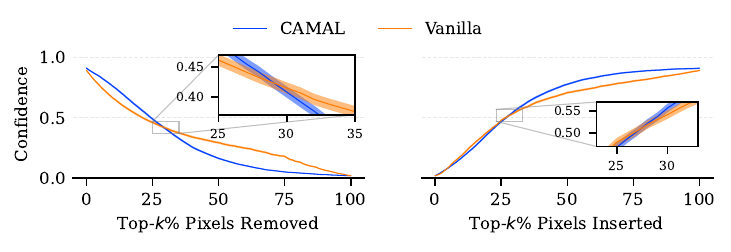}
    \caption{Comparison of attention faithfulness between \camal and Vanilla in DL. Model confidence is evaluated under removal and insertion of the top-$k$\% most important pixels determined by the model's attention. Curves are aggregated over all model-dataset combinations, with shaded regions denoting 95\% confidence bands.}
    \label{fig:cvf}
\end{figure}
\paragraph{Attention faithfulness comparison with Vanilla.}
The results for the DL experiments are shown in~\autoref{fig:cvf}.
The model confidence decreases and increases substantially faster for \camal for removal and insertion respectively.
This is further reflected in AUC, where \camal achieves 13.6\% lower AUC and 6.4\% higher AUC than Vanilla for removal and insertion respectively.
This indicates that the attention by \camal more accurately identifies features critical to its prediction.
Overall, similar to the comparison with Prior, these results demonstrate that compared to Vanilla, \camal reliably yields more faithful attention.

\begin{table}[h]
    \centering
    \begin{threeparttable}
    \caption{Mean training overhead ($\%$) incurred by \camal relative to Vanilla.}
        \begin{tabular}{@{}lrrr@{}}
        \toprule
             Model &  BUSI & Oxford-IIIT-Pet & Oxford-Flower\\
        \midrule
             ResNet &  1.0 & 5.0 & 0.2 \\
             ConvNeXt & 51.2 & 65.0 & 65.0 \\
             EfficientNet & 29.8 & 72.6 & 69.0 \\
             ViT & 55.2 & 63.1 & 62.7 \\
             Swin & 52.2 & 64.2 & 64.0 \\
             MaxViT & 60.5 & 69.2 & 68.9 \\
        \bottomrule
        \end{tabular}
    \label{tab:overhead}
    \end{threeparttable}
\end{table}
\paragraph{Overhead incurred by attention supervision.}
Attention supervision incurs overhead only during training, not at inference time, because the additional computation arises from computing attention and applying the auxiliary regularization---both of which are absent once the model is deployed.
Concretely, \camal requires one additional backward pass per mini-batch to compute attention, which constitutes the dominant source of overhead compared to Vanilla training.
As shown in~\autoref{tab:overhead}, this translates to negligible overhead for lightweight models (\eg ResNet) and moderate overhead for bigger models (\eg MaxViT) across datasets.

\paragraph{Limitations.}
A primary limitation of \camal is its reliance on segmentation masks, which are not universally available across all datasets.
While (1) many modern vision benchmarks increasingly provide such masks~\citep{illarionova2025packed,li2023behavior,OpenImagesV7,CelebAMask-HQ,Wydmuch2019ViZdoom,dosovitskiy2017carla}, and (2) recent advancements like prompt-based segmentation have significantly lowered the cost of generating human-verified masks~\citep{carion2025sam3segmentconcepts}, this requirement may still restrict applicability in domains where segmentation is unavailable or difficult to obtain.
In addition, while \camal does not introduce any additional inference costs, \camal introduces extra training costs due to the need to compute attention.
Taken together, \camal presents a trade-off: broader applicability may be constrained by mask availability and increased training cost, in exchange for more well-aligned and faithful attention, with improved or comparable generalization.

\paragraph{Broader impacts.}
A potential positive societal impact of this work is improved transparency and reliability of vision models, as better-aligned and more faithful models can help practitioners understand and trust model decisions in high-stakes domains such as healthcare and autonomous systems.
However, a potential negative impact is that such improvements could be misused to create more convincing AI systems whose decisions appear trustworthy, even when they are incorrect or biased, potentially enabling their deployment in sensitive or harmful applications. 
As a safeguard against potential misuse, we release only the source code and training procedures, but do not release pretrained model weights.

\section{Experimental Settings and Details}
\label{app:esd}
All experiments were conducted on a Ubuntu 22.04.5 LTS server equipped with an AMD EPYC 9554 CPU (64 cores, 128 threads), 755 GB of RAM, and four NVIDIA L40S GPUs, each with 48 GB of VRAM.
For both DL and DRL experiments, each trial was assigned a dedicated GPU to ensure fairness.
For example, the first trial for training the ResNet-BUSI-\camal DL combination had its own dedicated GPU.
The DL \camal-Vanilla comparison required 1.1K hours of computation, whereas the \camal-Prior comparison required 1.3K hours.
The DRL \camal-Vanilla comparison required 0.5K hours, bringing the total computational cost to approximately 3K hours.

\begin{table}[h]
    \centering
    \begin{threeparttable}
    \caption{The hyperparameters used for training the DL models.}
        \begin{tabular}{@{}lrl@{}}
        \toprule
             Hyperparameter & Value & Description\\
        \midrule
             Number of epochs & 500 & The number of epochs to train for. \\
             Learning rate & 0.001\tnote{1} & Optimized with AdamW~\citep{loshchilov2017decoupled}. \\
             Weight decay & 0.01 & The optimizer weight decay. \\
             Batch size & 32 & The mini-batch size to use. \\
             $\lambda_{\camal}$ & 1 & The \camal coefficient in~\autoref{eq:camalloss}. \\ 
        \bottomrule
        \end{tabular}
        \begin{tablenotes}
        \item[1] 0.0001 for ViT, Swin, and MaxViT
        \end{tablenotes}
    \label{tab:dlhyp}
    \end{threeparttable}
\end{table}
\paragraph{Deep learning.}
For the DL experiments, we selected six models spanning diverse architectural families, including convolutional-based, transformer-based, and hybrid convolution-transformer architectures (\eg ResNet, ViT, and MaxViT).
We use PyTorch's~\citep{ansel2024pytorch} default implementation of the evaluated models, initialized with ImageNet pre-trained weights~\citep{deng2009imagenet}.
Moreover, we selected three datasets with segmentation masks spanning two domains, medical imaging (BUSI) and natural images (Oxford-IIIT-Pet and Oxford-Flower).
BUSI is a dataset of breast ultrasound images containing annotated breast lesions, categorized into three classes: normal, benign, and malignant.
Oxford-IIIT-Pet and Oxford-Flower (also known as Flowers102) are datasets commonly used in vision tasks, comprising 37 pet categories (cats and dogs) and 102 flower categories respectively.
These three datasets share high inter-class similarity, making precise attention localization critical, thereby serving as a natural testbed for attention supervision.
The images in these datasets were resized to $256\times256$, center-cropped to $224\times224$, and normalized before being fed into the models for training.
We use 10-fold cross-validation when training each model-dataset-method combination.
The DL training hyperparameters can be viewed in~\autoref{tab:dlhyp}; for parameters not explicitly specified, we use the default settings by PyTorch.
For example, we use the default epsilon value of $1 \times 10^{-8}$ for the optimizer.

\begin{table}[h]
    \centering
    \begin{threeparttable}
    \caption{The hyperparameters used for training the DRL models.}
        \begin{tabular}{@{}lrl@{}}
        \toprule
             Hyperparameter & Value & Description\\
        \midrule
             Number of environments & 8 &  Number of environments used to collect training samples. \\
             Total timesteps & 10M & Total number of timesteps to train for. \\
             Number of timesteps & 128 & Number of timesteps per environment rollout. \\
             Number of epochs & 4 & Number of times to update the policy after each rollout. \\
             Number of mini-batches & 4 & Number of mini-batches in an epoch. \\
             Learning rate & 0.00025 & Optimized with Adam~\citep{kingma2014adam}. \\
             $\gamma$ & 0.99 & The discount factor. \\
             $\lambda_{GAE}$ & 0.95 & The general advantage estimation value. \\
             Clip coefficient & 0.1 & The surrogate clipping coefficient. \\
             Entropy coefficient & 0.01 & The entropy coefficient. \\ 
             Value function coefficient & 0.5 & The value function coefficient. \\ 
             Maximum gradient norm & 0.5 & The maximum norm for gradient clipping. \\ 
             $\lambda_{\camal}$ & 1 & The \camal coefficient in~\autoref{eq:camalloss}. \\ 
             Frame skip & 4 & The number of times an action is repeated. \\
             Frame stack & 4 & The history the model sees, inclusive of the current frame. \\
             Reward clip & -1, 1 & Lower and upper bounds for reward clipping. \\
        \bottomrule
        \end{tabular}
    \label{tab:drlhyp}
    \end{threeparttable}
\end{table}
\paragraph{Deep reinforcement learning.}
For the DRL experiments, we use CleanRL's PPO implementation~\citep{huang2022cleanrl} and selected four environments from the ViZDoom benchmark that cover diverse task demands: Basic, DefendLine, DefendCenter, and TakeCover.
Basic involves a single monster that the agent must eliminate, serving as a simple setting to evaluate fundamental perception and targeting.
DefendLine increases the difficulty by introducing multiple monsters aligned in front of the agent, requiring sequential target prioritization and efficient shooting.
DefendCenter further expands the complexity by spawning multiple monsters around the agent, demanding 360-degree awareness.
TakeCover differs from the previous tasks by focusing on survival rather than offense, where the agent must evade incoming enemy projectiles.
We train 5 agents for each environment-method combination.
The DRL training hyperparameters can be viewed in~\autoref{tab:drlhyp}; for parameters not explicitly specified, we use the default settings by CleanRL and ViZDoom.
For example, we use the default 5-layer CNN by CleanRL and the default frame size of $240\times320$ by ViZDoom.

\paragraph{Additional details.}
During training, we compute Grad-CAM at the final convolutional layer, as it captures high-level semantic information while preserving spatial structure; however this choice may vary depending on the model architecture (see source code for full implementation details).
Furthermore, we compute Grad-CAM with respect to the mean weighted combination of feature maps by dividing by $K$, the total number of feature maps, as we empirically observe that this yields more accurate CAMs (see ~\autoref{eq:gc}).

During evaluation, to compute the attention alignment for an image $I^c$ in the test set, we first quantize the corresponding model attention $H^c$ into a binary mask $H^c_{\mathrm{mask}}$ via simple thresholding, where values above $\tau$ are set to 1 and the rest to 0.
We use $\tau=0.7$, as we found it to effectively retain important regions while suppressing noise across models and datasets.
Subsequently, we compute the IoU between $H^c_{\mathrm{mask}}$ and the corresponding segmentation mask $M^c$ to obtain the attention alignment score for $I^c$:
\begin{equation}
    \text{IoU}(H^c_{\mathrm{mask}}, M^c)=\frac{|H^c_{\mathrm{mask}}\cap M^c|}{|H^c_{\mathrm{mask}}\cup M^c|}
\end{equation}
To evaluate attention faithfulness, we adopt a representative sampling strategy where, for each trial, we select one sample per class from the test set to perform removal and insertion-based faithfulness evaluation.
This approach ensures coverage across all classes while keeping the evaluation computationally tractable, as the removal and insertion experiments require multiple forward passes per sample.
Furthermore, the selected samples vary across different cross-validation splits, thereby mitigating sampling bias and improving the robustness of the aggregated results.

\paragraph{Assets and licenses.}
This paper is licensed under CC BY 4.0, while the accompanying source code is released under the MIT license.
The Oxford-IIIT-Pet dataset is distributed under the CC BY-SA 4.0 license, while the Oxford-Flower and BUSI datasets do not explicitly specify a license but are available for public use.\footnote{\href{https://www.robots.ox.ac.uk/~vgg/data/pets/}{https://www.robots.ox.ac.uk/~vgg/data/pets/}}\footnote{\href{https://www.robots.ox.ac.uk/~vgg/data/flowers/102/}{https://www.robots.ox.ac.uk/~vgg/data/flowers/102/}}\footnote{\href{https://scholar.cu.edu.eg/?q=afahmy/pages/dataset}{https://scholar.cu.edu.eg/?q=afahmy/pages/dataset}}
Moreover, our models use ImageNet pretrained weights, where ImageNet is publicly available for non-commercial research and educational purposes under its custom terms and conditions.\footnote{\href{https://www.image-net.org/download.php}{https://www.image-net.org/download.php}}
Finally, CleanRL and ViZDoom are released under the MIT license, while PyTorch is released under the BSD-3 license.\footnote{\href{https://github.com/vwxyzjn/cleanrl}{https://github.com/vwxyzjn/cleanrl}}\footnote{\href{https://github.com/Farama-Foundation/ViZDoom}{https://github.com/Farama-Foundation/ViZDoom}}\footnote{\href{https://github.com/pytorch/pytorch}{https://github.com/pytorch/pytorch}}
We use the latest available versions of all datasets; the exact software dependencies and their versions are provided in our source code.


\end{document}